\DeclareMathOperator*{\argmin}{arg\,min}
\newcommand{\be}{\begin{equation}}
\newcommand{\ee}{\end{equation}}
\newcommand{\bewn}{\begin{equation*}}
\newcommand{\eewn}{\end{equation*}}
\newcommand{\bbmat}{\begin{bmatrix}} 
	\newcommand{\ebmat}{\end{bmatrix}}
\newcommand{\bd}{\begin{displaymath}}
\newcommand{\ed}{\end{displaymath}}
\newcommand{\bea}{\begin{eqnarray}}
\newcommand{\eea}{\end{eqnarray}}
\newcommand{\ba}{\begin{array}}
	\newcommand{\ea}{\end{array}}
\newcommand{\baa}{\begin{array}{ll}}
	\newcommand{\eaa}{\end{array}}
\newcommand{\ds}{\displaystyle}
\newcommand{\bc}{\begin{center}}
	\newcommand{\ec}{\end{center}}
\newcommand{\ben}{\begin{enumerate}}
	\newcommand{\een}{\end{enumerate}}
\newcommand{\bi}{\begin{itemize}}
	\newcommand{\ei}{\end{itemize}}
\newcommand{\bt}{\begin{tabular}}
	\newcommand{\et}{\end{tabular}}
\newcommand{\bte}{\begin{table}}
	\newcommand{\ete}{\end{table}}
\newcommand{\bal}{\begin{align}}
\newcommand{\eal}{\end{align}}
\newcommand{\norm}[1]{\left\lVert#1\right\rVert}   %already in physics package
\renewcommand\paragraph{\@startsection{paragraph}{4}{\z@}%
	{-2.5ex\@plus -1ex \@minus -.25ex}%
	{1.25ex \@plus .25ex}%
	{\normalfont\normalsize\bfseries}}
\newtheorem{remark}{Remark}
\newtheorem{proposition}{Proposition}
\newtheorem{theorem}{Theorem}
\newtheorem{corollary}{Corollary}
\newtheorem{lemma}[theorem]{\textbf{Lemma}}
\newtheorem{problem}{\textbf{Problem}}
\newcommand{\bR}{\mathbb{R}}
\newcommand{\calC}{\mathcal{C}}
\newcommand{\calE}{\mathcal{E}}
\newcommand{\calF}{\mathcal{F}}
\newcommand{\calG}{\mathcal{G}}
\newcommand{\calK}{\mathcal{K}}
\newcommand{\calO}{\mathcal{O}}
\newcommand{\calV}{\mathcal{V}}
\newcommand{\calW}{\mathcal{W}}
\newcommand{\calY}{\mathcal{Y}}
\newcommand{\pdydx}[2]{\frac{\partial{#1}}{\partial{#2}}}
\begin{document}
	% \linenumbers  (Add line numbers for review purpose)
	\title{\LARGE \bf
		Approximate Time-Optimal Trajectories for Damped Double Integrator in 2D Obstacle Environments under Bounded Inputs%via Bayesian Approach and POMDPs %(Hierarchical)
	}

	\author{Vishnu S. Chipade and Dimitra Panagou% <-this % stops a space
		%\thanks{*This work was not supported by any organization}% <-this % stops a space
		%\thanks{$^{1}$Albert Author is with Faculty of Electrical Engineering, Mathematics and Computer Science,
		%        University of Twente, 7500 AE Enschede, The Netherlands
		%        {\tt\small albert.author@papercept.net}}%
		\thanks{The authors are with the Department of Aerospace Engineering,
			University of Michigan, Ann Arbor, MI, USA;
			{\tt\small (vishnuc,
				dpanagou)@umich.edu}}
		\thanks{This work has been funded by the Center for Unmanned Aircraft Systems (C-UAS), a National Science Foundation Industry/University Cooperative Research Center (I/UCRC) under NSF Award No. 1738714 along with significant contributions from C-UAS industry members.}
	}	
	
	\maketitle
	\thispagestyle{empty}
	\pagestyle{empty}

	%%%%%%%%%%%%%%%%%%%%%%%%%%%%%%%%%%%%%%%%%%%%%%%%%%%%%%%%%%%%%%%%%%%%%%%%%%%%%%%%
	\begin{abstract}
		
		This article provides extensions to existing path-velocity decomposition based time optimal trajectory planning algorithm \cite{kant1986toward} to scenarios in which agents move in 2D obstacle environment under double integrator dynamics with drag term (damped double integrator). Particularly, we extend the idea of tangent graph \cite{liu1992path} to $\calC^1$-Tangent graph to find continuously differentiable ($\calC^1$) shortest path between any two points. $\calC^1$-Tangent graph has continuously differentiable ($\calC^1$) path between any two nodes. We also provide analytical expressions for a near time optimal velocity profile for an agent moving on these shortest paths under damped double integrator with bounded acceleration. 
		
		\textit{Index Terms}---time-optimal control, trajectory planning, double integrator with drag.		
		%Control laws for forming this StringNet and guiding it to a safe area are developed and the performance is analyzed formally. Flocking motion is considered for the adversarial swarm in the presence of rectangular obstacles for which modifications to the existing flocking control laws are provided.  
		
	\end{abstract}
	%and is such that there is no singular point other than the goal location
	
	%%%%%%%%%%%%%%%%%%%%%%%%%%%%%%%%%%%%%%%%%%%%%%%%%%%%%%%%%%%%%%%%%%%%%%%%%%%%%%%%
	\section{Introduction}
	%Accomplishments by a swarm of robots are only limited by our imagination. 
	Trajectory planning is a very important problem for autonomous robots. A significant body of literature is available that solves the problem of trajectory planning \cite{kroger2010literature}.
	For obstacle environments, authors in \cite{kant1986toward} have proposed a method called path-velocity decomposition for trajectory planning. In this article, we consider the idea of path-velocity decomposition \cite{kant1986toward} to find time-optimal trajectory for an agent operating under double integrator dynamics with drag term (damped double integrator) to move from one point to another in an obstacle environment. The path-velocity decomposition approach \cite{kant1986toward} as the name suggests decomposes the trajectory planning problem in two sub-problems: 1) finding a path that avoids collision with static obstacles (path planning), 2) finding velocity profile on the path obtained in 1) to avoid moving obstacles (velocity planning). In \cite{kant1986toward}, V-Graph (vertex graph) is used to find shortest path between to points in polygonal obstacle environment while a simple single integrator model with bounded speed is used to determine the a feasible time optimal velocity profile on the shortest path. 
	
    A better graph representation of an obstacle environment called Tangent graph, which is graph of consisting of common tangents of the polygonal obstacles in an environment is used in \cite{liu1992path} for path planning. However, the paths obtained using Tangent graph are only continuous and not necessarily continuously differentiable ($\calC^1$). We build upon the idea of tangent graph to find $\calC^1$-Tangent graph by considering $\calC^1$ boundaries around the polygonal obstacles in the environment. The $\calC^1$-Tangent graph consists of common tangents to the continuous closed convex boundaries around the polygonal obstacles and there exists a $\calC^1$ path between any two nodes on the $\calC^1$-Tangent graph. This allows us to plan paths for agents moving under second order dynamics. We provide a novel quadratic function to systematically find the common tangents of the $\calC^1$ boundaries of the obstacle. 
    
    Authors in \cite{kim2004shortest} propose two filters: ellipse and convex-hull filter to reduce the search space for finding the shortest paths in the presence of circular obstacles. In this article, we extend the ellipse filter to more generic polygonal obstacles to reduce the search space so that more general obstacle environments can be considered.
    
    In \cite{kunz2012time}, the authors compute a time-optimal velocity profile for following a given path by meticulously keeping track of the lowest bound on the maximum speed and acceleration along the path, while maintaining the overall constant bound on the components of the acceleration vector and velocity vector. However, this approach requires several forward and backward numerical integration of the system dynamics, therefore can be computationally time intensive and also it would result in sub-optimal velocity profiles when euclidean norm constrains are to be considered on the acceleration. Compared to \cite{kunz2012time}, in this article, we consider euclidean norm constraints on the acceleration input which is typical for many under actuated systems and we design near time-optimal velocity\footnote{Near time-optimal in this paper means actual travel time $\tau$ satisfies $\tau^*\le \tau \le (1+\varepsilon )\tau^*$ where $\tau^*$ is the optimal travel time and $\varepsilon<<1$ is a small, positive constant.} profiles for agents moving under damped double integrator for which analytical expressions can be provided saving on computational time.
    	
	In summary, the contributions of this article are as follows:

	\bi
	\item[1)] derivation of continuous ($\mathcal C^0$), near time-optimal velocity profiles for agents moving under damped double integrator dynamics, so that they travel along the shortest paths between the given initial and final points while satisfying acceleration bounds;
	
	\item[2)] derivation of a novel quadratic function to find the common tangents of two continuously differentiable ($\mathcal C^1$), closed, convex curves that approximate the polygonal obstacles;
	
	\item[3)] an extension of ellipse filter \cite{kim2004shortest} for general convex polygonal obstacles to reduce the search space to find the shortest paths between any two nodes on $\calC^1-$Tangent Graph.
	
	%\item[4)] A MILP formulation to modify the assigned trajectories to avoid the inter-agent collisions among the defenders
	\ei

	%\item We provide robust strategy for defenders against the splitting of the attackers in order to guarantee herding of the attackers even after the attackers split into smaller teams or to guarantee capture of the attackers in case none of the attackers stays together.
	%\item For a given initial conditions, we provide set of initial conditions for the attackers from which if they start they are guaranteed to be herded to one of the safe areas.
	
	%\item The extension of standard flocking control laws \cite{olfati2006flocking,dai2014flocking} by designing $\beta$-agents along a $\calC^1$ boundary around the convex polygonal obstacles.
	%; this way, $\calC^1$ velocity profiles for the $\beta$-agents to avoid the obstacles during flocking are obtained.
	
	%superelliptic smooth contour around rectangular obstacles for obstacle avoidance in flocking. This allows smooth velocity profile for the $\beta$-agents while being less conservative around the rectangular obstacles.
	%\item More realistic system dynamics: double integrator with linear drag term.
	
	%\item Novel vector fields around rectangular obstacles to guide the agents to the safe areas
	% \item Optimal assignment of defenders against the teams of attackers
	% \item Compared to our previous work we also assign the desired locations to the defenders optimally such that the total nominal distance is minimized.
	% \item Herding multiple swarms of attackers to safe areas
	% \item Finite time boundedness around the desired positions in the formation 
	% \item Collision avoidance with static and moving obstacles

	\subsection{Organization}
	The rest of the article is structured as follows: Section \ref{sec:math_model} provides the mathematical modeling and problem statement.  In Section \ref{sec:near_time_opitmal_traj}, we give an overview of the path-velocity decomposition method to find near time-optimal trajectory in obstacle environment. Section \ref{sec:c1_tangent_graph} provides details on $\calC^1$-Tangent graph and shortest paths. In Section \ref{sec:near_time_opitmal_traj} we provide analytical expression to find near time optimal velocity profile on given shortest path and the article in concluded in Section \ref{sec:conclusions}.
	
	\section{Modeling and Problem Statement}\label{sec:math_model}
	We consider an agent moving under double integrator dynamics with a quadratic drag term (damped double integrator): 
	\be \label{eq:attackDyn1}
	\baa
	\dot{\mathbf{r}}
	=\mathbf{v},\\
	\dot{\mathbf{v}}
	=\mathbf{u}-C_D\norm{\mathbf{v}}\mathbf{v};
	\eaa
	\ee
$C_D>0$ is the known, constant drag coefficient, $\mathbf{r}=[x\; y]^T$ is the position vector $\mathbf{v}=[v_{x}\; v_{y}]^T$, is the velocity vector, and $\mathbf{u}=[u_{x}\; u_{y}]^T$ is the acceleration of the agent, which serve also as the control input, all resolved in a global inertial frame $\calF_{gi} (\hat {\mathbf{i}}, \hat {\mathbf{j}})$ (see Fig.\ref{fig:shortest_path}). The acceleration $\mathbf{u}$ is bounded as:
 \be \label{eq:input_constraints}
 \norm{\mathbf{u}}\le \bar{u}.
 \ee
	%The acceleration norms are bounded by $\bar{u}_a$ and $\bar{u}$, i.e., $\norm{\mathbf{u}_{ai}}<\bar{u}_a$ and $\norm{\mathbf{u}_{dj}}<\bar{u}$.
	The dynamics in \eqref{eq:attackDyn1} take into account the air drag experienced by the agents modeled as a quadratic function of the velocity. %, which is a realistic modeling of the drag for aerial robots.
	Note also that the above damped double integrator inherently poses a speed bound on the agent under a limited acceleration control, i.e., $\norm{\mathbf{v}}<\bar{v}=\sqrt{\frac{\bar{u}}{C_D}}$, and does not require to consider an explicit constraint on the velocity of the agents while designing bounded controllers as has been done in the literature.
	%The following assumption and the remark are to be used for UAVs in 3D
	
	%	\begin{assumption} \cite{zhu2017distributed} The rotational (attitude for 3D) control of the agents is designed and implemented perfectly,
	%		such that the dynamics can be given as modified double integrator dynamics given in \eqref{eq:attackDyn1}, \eqref{eq:defendDyn1} and the transnational acceleration is the input to be designed. 
	%		%(The assumption and the remark are to be used for UAVs in 3D)
	%	\end{assumption}
	\begin{comment}
	\begin{remark}
	Experimentally, the time-response of the low level inner-loop control can be tuned much faster than that of
	the high-level outer-loop, such that the inner-loop (attitude) dynamics can be neglected in high-level design. This assumption
	is widely applied in guidance design [5], [6].
	\end{remark} 
	\end{comment}
	
	%Generally, buildings have rectangular shape and are the most common obstacles for low altitude flights. For simplicity, these obstacles are usually modeled as circles in literature \cite{panagou2014motion}. 
	We consider $N_o$ static, convex polygonal obstacles $\calO_k$, $ k \in I_o = \{1,2,...,N_o\}$, (grey colored polygons in Fig.~\ref{fig:shortest_path}),
	%	 with their edges aligned with the axes of $\calF_g$,   %and use super-elliptic contours which closely resemble the rectangular shape in order to define a repulsive vector field around them, which are are 
	%	defined as:
	%	\be
	%	\calO_k= \{\mathbf r \in \mathbb{R}^2 | \mathbf{Z}_{ok}\left(\mathbf{r}-\mathbf{r}_{ok}\right)  \le \mathbf{c}_{ok}  \} ,
	%	\ee
	%	or equivalently $\calO_k$ 
	described as the convex hull of their vertices,
	\be
	\calO_k=  Conv\left(\{\mathbf r_{ok}^1, \mathbf r_{ok}^2,...,\mathbf r_{ok}^{M_k}\}\right ),
	\ee
	% \be
	% \calO_k= \{\mathbf r \in \mathbb{R}^2 | \abs{x-x_{ok}} \le \frac{w_{ok}}{2}, \abs{y-y_{ok}} \le \frac{h_{ok}}{2}\} ,
	% \ee
	where $Conv(Q)$ is the convex hull of the points given in the set $Q$, $\mathbf r_{ok}^\ell=[x_{ok}^\ell \; y_{ok}^\ell]^T$ are the positions of the vertices for all $\ell \in \{1,2,...,M_k\}$, $M_k$ is the total number of vertices of $\calO_k$, $k \in I_o$. The boundary of $\calO_k$ is denoted by $\partial \calO_k$. Inspired from \cite{hegde2016multi} and \cite{esquivel2002nonholonomic}, the boundaries $\partial \calO_k$ are inflated by a size of $\rho_{\bar{o}}$ $(> \rho_d)$ to account for safety and agent size. The inflated obstacles are denoted by $\bar{\calO}_k$, and are given as (Fig.~\ref{fig:shortest_path}): $
	\bar{\calO}_k= \calO_k \bigoplus B(\rho_{\bar{o}}),	
	$ where $\bigoplus$ denotes the Minkowski sum of the sets and $B(\rho_{\bar{o}})$ denotes a ball of radius $\rho_{\bar{o}}$ centered at the origin.
	% 	\be
	% 	\baa
	% 	\bar{\calO}_k= \left \{  \mathbf r \in \mathbb{R}^2 | \mathbf{r} \in \bar{\calO}_k^{0} \cup  \sum_{m=1}^{M_k} \bar{\calO}_k^m \right \},	
	% 	\eaa
	% 	\ee
	% 	where $
	% 	\bar{\calO}_k^{0} = \left \{  \mathbf r \in \mathbb{R}^2 |  \mathbf r =  \sum_{l=1}^{2M_k} \bar{\alpha}_{l} \mathbf{r}_{\bar{o}k}^{l},  \forall l:
	% 	\bar{\alpha}_{l} \ge 0, \right. \; $ $\left. \sum_{l=1}^{2M_k} \bar{\alpha}_{l}=1  \right \} ;
	% 	$ $
	% 	\bar{\calO}_k^{m} = \left \{  \mathbf r \in \mathbb{R}^2 |  \norm{\mathbf{r}-\mathbf{r}_{ok}^m} \le \rho_{\bar{o}} \right \} , \forall m \in {1,2,\dots, M_k}.
	% 	$
	\textcolor{black}{The boundary} $\partial \bar{\calO}_k$ of the inflated obstacle $\bar{\calO}_k$ is a $\calC^1$ curve for all $\rho_{\bar{o}}>0$.	%(Fig.~\ref{fig:shortest_path}). % as it consists of straight line segments parallel to the edges of $\calO_k$ and of circular arc segments centered at the vertices of $\calO_k$, which are tangent to the straight line segments.
% 	\begin{figure}
% 		\centering
% 		\includegraphics[width=.9\linewidth,trim={11cm 2.2cm 9.5cm 1.5cm},clip]{figures/problem_formulation.eps}
% 		\caption{\textcolor{black}{Problem Formulation}}
% 		\label{fig:shortest_path}
% 	\end{figure}
	
	We consider the following problem of finding a near time-optimal trajectory for the agent operating under the dynamics in \eqref{eq:attackDyn1} and \eqref{eq:input_constraints}.
	\begin{problem}[Near Time-optimal trajectory]
		Design a control action $\mathbf{u}$ in analytical form such that the agent operating under dynamics \eqref{eq:attackDyn1} and \eqref{eq:input_constraints} travels from an initial position $\mathbf{r}_0$ to final position $\mathbf{r}_f$ in minimum time possible. 
	\end{problem}
	In the next section, we give overview of the near time-optimal trajectory generation algorithm.

	\section{Near Time-optimal Trajectory}\label{sec:near_time_opitmal_traj}

	The near time-optimal trajectory between two given points is obtained by path-velocity decomposition \cite{kant1986toward}, which consists of finding i) the shortest path, and ii) a near time-optimal velocity profile along the shortest path. 
	
	%	Before discussing the MIQP formulation, we discuss shortest path between two points, time optimal velocity profiles on this shortest path and collision checking in next few subsections.
	% The steps involved in this approach are: 1) first design shortest paths for the defenders to these desired goal locations, 2) Pair the defenders with the desired locations based on the minimum time required to travel the shortest path corresponding to the assigned goal location, 
	% %using a polynomial time Hungarian Algorithm \cite{munkres1957algorithms}, 
	% 3) Design velocity profiles for the defenders to trace the path corresponding to the assigned goal location while avoiding the potential collisions with other defenders. 

	i) \textit{Shortest paths}:
	We propose an approach wherein the kinematic and safety constraints are directly incorporated while constructing a special representation of the obstacle environment called $\calC^1$-Tangent graph, inspired from the idea of tangent graph \cite{liu1992path}.
	%, and differently to the visibility graph approaches for finding curvature-constrained shortest paths \cite{maini2016path}. 
	The $\calC^1$-Tangent Graph ($\calG_{ct}$) consists of $\mathcal C^1$ paths between any two nodes. For an obstacle environment with convex polygonal obstacles $\calO_k$, the construction of the $\calC^1$-Tangent graph involves connecting the enlarged obstacles $\bar{\calO}_k$ with their common tangents; then, the points at which these tangents touch the boundaries $\partial \bar{\calO}_k$ serve as the nodes on $\calG_{ct}$. Using $\calG_{ct}$, one can find a path between any two nodes on $\calG_{ct}$ in the obstacle-free environment $\textcolor{black}{\calW_{free}}=\calW \backslash (\cup_{k=1}^{N_o} \calO_k) $ using Dijkstra's algorithm \cite{cormen2009introduction}. The details of $\calC^1$-Tangent graph and shortest path are provided in section \ref{sec:c1_tangent_graph}
	%The graph $\calG_{ct}$ can be considered as a roadmap in which there exist a path in obstacle free environment $\textcolor{black}{\calW_{free}}$ between any two nodes on the roadmap.

% 	The details about the $\calC^1$-tangent graph are provided in Appendix \ref{append:shortest_path}. Specifically, we provide a novel quadratic function whose zeros give the locations at which the lines connecting the obstacles are also tangent to the obstacles. We also extend the ellipse filter  \cite{kim2004shortest} to the case of general convex polygonal obstacles; the filter is used to reduce the search space in $\calG_{ct}$ while finding the shortest paths (see Appendix \ref{append:shortest_path} for details).
	
 ii) \textit{Near time-optimal velocity profiles}:
	%	To travel on the shortest paths, we seek time-optimal velocity profiles with bounded acceleration. 
	%developed a MILP formulation for coordinating velocities of multiple agents on given paths, while ensuring minimum time to reach the goals and avoiding collisions. The authors
	% With this bound the  Although this assumption can be applicable to dubin like vehicles on the dubins paths, it cannot model multi-rotor vehicles whose dynamics can be closely modeled by double integrator in the original state space. 
	%Furthermore, computational time to solve the optimization problem would be a bottleneck while applying this approach to problems which require solving the problem more frequently or with time critical application such as ours. 
	%We provide a two tier approach to solve the velocity coordination problem.
	%% READ UP TO HERE
	The shortest path between two points obtained using $\calC^1$-Tangent graph consists of straight line segments and circular arcs.
	%as shown in Fig.~\ref{fig:shortest_path}. 
	Our goal is to obtain analytical expression for a near time-optimal velocity profile for each defender on its desired shortest path under bounded acceleration, to avoid huge computational cost that numerical integration schemes incur. We build on the approach described in \cite{peng2005coordinating,kunz2012time}, adopting however the following different assumptions: 1) we consider a quadratic drag term in the dynamics, which apart from being a realistic assumption for aerial vehicles also imposes an inherent bound on the velocity; 2) we consider a constant bound on the norm of the acceleration instead of bounds on each component, as this is a more realistic assumption for under-actuated, multi-rotor vehicles such as quadrotors or hexacopters. We provide analytical expressions for the velocity bounds along a path, and for the time to traverse the path. The technical details on finding the near time-optimal velocity profiles on the shortest paths are given in Section \ref{sec:time-optimal_velocity_profile1}.\\
% 	\begin{remark}
% 	 %The authors in \cite{peng2005coordinating} used double-integrator dynamics that governs the distance $\gamma$ along the path and the speed $v$ along the path with a constant bound on $\dot{v}$. However, for paths with varying curvature, a constant bound on $\dot{v}$ would be a conservative assumption. 
% 	\end{remark}
\section{$\calC^1$-Tangent Graph and Shortest Path}\label{sec:c1_tangent_graph}
		The $\calC^1$-Tangent Graph $\calG_{ct}$ requires finding common tangents of the pairwise boundary curves $\partial \bar{\calO}_{k}$, for all $k \in I_o$. \textcolor{black}{We know that
% 		\begin{lemma}
			any two closed convex curves $\calC_1$, $\calC_2$ in 2D with non-overlapping interiors will have a maximum of four common tangent lines (Corollary 3.2 in \cite{czedli2016note}).
% 		\end{lemma}
% 		\begin{proof}
% 			The proof follows from Corollary 3.2 in \cite{czedli2016note}.
% 			% Let us consider two convex curves $\calC_1$ and $\calC_2$
% 			% Convex curve is a curve in Euclidean space which lies entirely on one side of its all tangent lines. For any common tangent to   
% 		\end{proof}
		}
		To find the common tangents of the approximated convex obstacles $\bar{\calO}_{k}$, we formulate a novel quadratic function whose zeros give the locations at which the common tangents are tangent to the boundary curves $\partial \bar{\calO}_{k}$ and $\partial \bar{\calO}_{k'}$. The function is developed for general convex curves as detailed in Lemma \ref{lem:common_tangent_fun}. 
		
		\begin{lemma}[Common Tangents] \label{lem:common_tangent_fun}
			Consider any two closed convex curves $\calC_1$, $\calC_2$ parameterized by $\calY_1:[0,\Gamma_1] \rightarrow \bR^2$ and $\calY_2:[0,\Gamma_2] \rightarrow \bR^2$, where $\calY_{\imath}(\gamma_{\imath})=[x_{\imath}(\gamma_{\imath}), \; y_{\imath}(\gamma_{\imath})]^T$ for ${\imath}=\{1,2\}$. Define 
			$
			f(\gamma_1,\gamma_2)=\left (m(\gamma_1,\gamma_2)-m_1(\gamma_1) \right)^2+\left (m(\gamma_1,\gamma_2)-m_2(\gamma_2) \right)^2,
			$
			where $m(\gamma_1,\gamma_2)=\frac{y_2(\gamma_2)-y_1(\gamma_1)}{x_2(\gamma_2)-x_1(\gamma_1)}$ is the slope of the line joining $\calY_1(\gamma_1)$ and $\calY_2(\gamma_2)$, and for ${\imath}=\{1,2\}$, $m_{\imath}(\gamma_{\imath})=\frac{y_{\imath}^{'}(\gamma_{\imath})}{x_{\imath}^{'}(\gamma_{\imath})}$ is the slope of the tangent to $\calC_{\imath}$ at $\calY_{\imath}(\gamma_{\imath})$, where $'$ denotes the derivative with respect to $\gamma_{\imath}$ (see Fig. \ref{fig:common_tangent}). The function $f$ is locally convex and $f \ge 0, \forall [\gamma_1,\;\gamma_2]^T \in [0,\;\Gamma_1] \times [0,\;\Gamma_2]$. The solutions $\bm{\gamma}^{*}=[\gamma_1^{*},\;\gamma_2^{*}]^T$ to $f(\gamma_1,\gamma_2)=0$ give the points through which common tangents to $\calC_1$ and $\calC_2$ pass.
		\end{lemma}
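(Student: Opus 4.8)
The plan is to verify the three separate claims — nonnegativity, the meaning of the zero set, and local convexity — essentially in order of increasing difficulty. Nonnegativity is immediate: $f$ is a sum of two squares of real-valued quantities, hence $f(\gamma_1,\gamma_2)\ge 0$ for every $[\gamma_1,\gamma_2]^T\in[0,\Gamma_1]\times[0,\Gamma_2]$ on which $f$ is defined (i.e. wherever $x_2(\gamma_2)\neq x_1(\gamma_1)$; the vertical-tangent case can be handled by the usual convention of passing to reciprocal slopes or by a rotation of coordinates, and I would add a sentence to that effect).

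Next I would characterize the zero set. Since $f$ is a sum of two squares, $f(\gamma_1,\gamma_2)=0$ holds if and only if both terms vanish, i.e. $m(\gamma_1,\gamma_2)=m_1(\gamma_1)$ and $m(\gamma_1,\gamma_2)=m_2(\gamma_2)$, which together force $m_1(\gamma_1)=m_2(\gamma_2)=m(\gamma_1,\gamma_2)$. Geometrically this says the chord through $\calY_1(\gamma_1)$ and $\calY_2(\gamma_2)$ has the same slope as the tangent line to $\calC_1$ at $\calY_1(\gamma_1)$ and as the tangent line to $\calC_2$ at $\calY_2(\gamma_2)$; since a line through a point of a curve that has the curve's tangent slope there \emph{is} the tangent line, the chord coincides with both tangent lines, so it is a common tangent. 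Conversely, any common tangent touches $\calC_1$ at some $\calY_1(\gamma_1^*)$ and $\calC_2$ at some $\calY_2(\gamma_2^*)$, and for that pair the chord is the common tangent itself, so all three slopes agree and $f(\gamma_1^*,\gamma_2^*)=0$. This establishes the stated correspondence between zeros of $f$ and common-tangent contact points.

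The main obstacle is local convexity. Here the natural approach is to compute the Hessian of $f$ and show it is positive semidefinite in a neighborhood of each zero $\bm\gamma^*$. Writing $g_1=m-m_1$ and $g_2=m-m_2$, so $f=g_1^2+g_2^2$, the Hessian is $\nabla^2 f = 2\sum_{i}\big(\nabla g_i\,\nabla g_i^T + g_i\,\nabla^2 g_i\big)$; at a zero of $f$ we have $g_1=g_2=0$, so the second group of terms drops out and $\nabla^2 f\big|_{\bm\gamma^*} = 2\big(\nabla g_1\,\nabla g_1^T + \nabla g_2\,\nabla g_2^T\big)\succeq 0$, which gives positive semidefiniteness exactly at the zeros; by continuity of the second derivatives (using that $\calC_1,\calC_2$ are $\calC^1$ with, as needed, $\calC^2$ parameterizations away from degeneracies, and that $x_2-x_1\neq 0$ there) the Hessian stays positive semidefinite on a neighborhood, giving local convexity near each common-tangent solution. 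I would remark that convexity is only claimed locally precisely because the cross terms $g_i\nabla^2 g_i$ can spoil it far from the zero set, and that in practice this local property is what is needed to drive a Newton/gradient iteration to the tangent points. The one technical point to be careful about is the regularity of the parameterizations $\calY_\imath$ (one needs $x_\imath'(\gamma_\imath)\neq 0$ and enough smoothness to differentiate $m_\imath$), which holds on the relevant arcs of the inflated obstacles $\partial\bar{\calO}_k$ by construction; I would state this as a standing assumption on the parameterization.
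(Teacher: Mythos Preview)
Your approach is essentially the same as the paper's: both dispose of $f\ge 0$ as a sum of squares, read off the zero-set characterization from the simultaneous vanishing of both squared terms, and argue local convexity by examining the Hessian at each zero $\bm\gamma^*$. The paper does the last step by writing out all four entries of $\nabla^2 f$ explicitly and then evaluating $\det(\nabla^2 f)|_{\bm\gamma^*}=4(m_{\gamma_1})^2(m_{\gamma_2})^2\ge 0$; your decomposition $\nabla^2 f|_{\bm\gamma^*}=2\bigl(\nabla g_1\nabla g_1^T+\nabla g_2\nabla g_2^T\bigr)\succeq 0$ is the same computation packaged more transparently (and, unlike a bare determinant check, it directly yields positive semidefiniteness rather than only $\det\ge 0$).

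One point to tighten: the step where you invoke continuity to propagate positive semidefiniteness from $\bm\gamma^*$ to a full neighborhood does not work as written, since PSD is a closed condition, not an open one---a singular PSD Hessian can become indefinite under arbitrarily small perturbation. The paper glosses over the same issue (it too only evaluates at $\bm\gamma^*$). The clean fix is to note that, generically, $\nabla g_1$ and $\nabla g_2$ are linearly independent at a common-tangent pair, which makes $\nabla^2 f|_{\bm\gamma^*}$ strictly positive definite; positive definiteness \emph{is} open, so continuity then does give convexity on a neighborhood. Alternatively, one can interpret ``locally convex'' here as the weaker statement that $\bm\gamma^*$ is a local minimizer with PSD Hessian, which is all the downstream nonlinear solver actually needs.
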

		\begin{figure}[h]
			\centering
			\includegraphics[width=.7\linewidth,trim={5.5cm 2.2cm 5.5cm 1.7cm},clip]{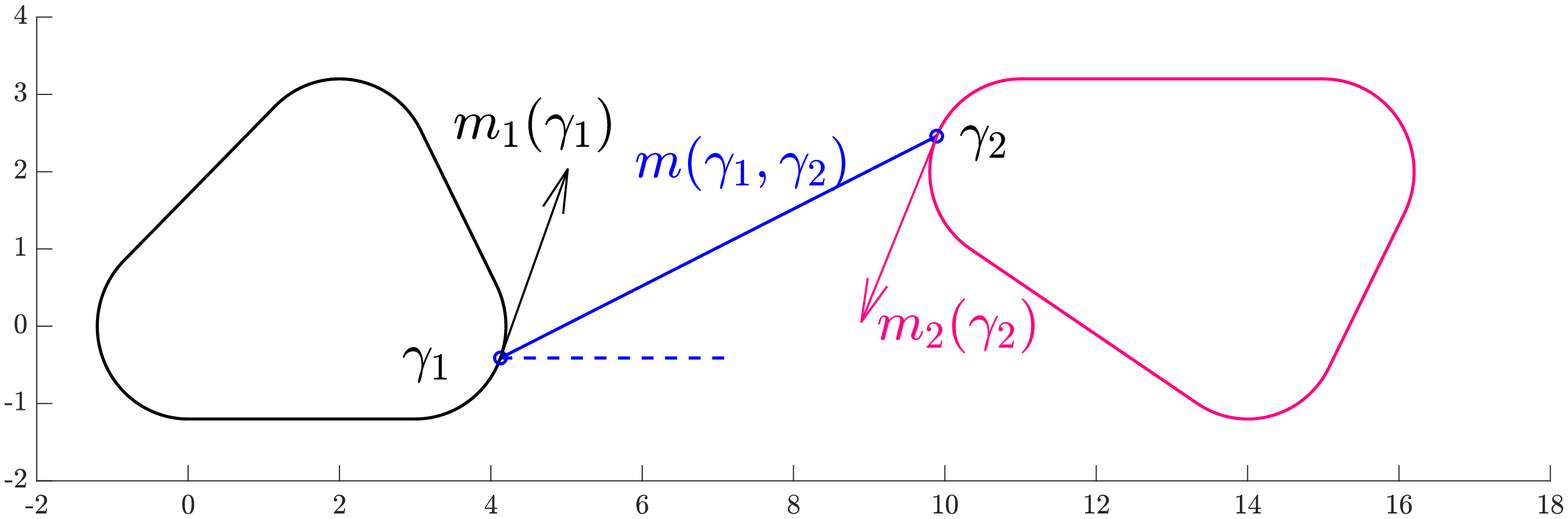}
			\caption{Common Tangent}
			\label{fig:common_tangent}
		\end{figure}
		\begin{proof}
		We first prove the local convexity property of $f(\gamma_1,\gamma_2)$. The gradient of $f(\gamma_1,\gamma_2)$ is:
		\be
		\nabla f(\gamma_1,\gamma_2) =\bbmat 2\bigl((m-m_1)(m_{\gamma_1}-m_1^{\prime})+(m-m_2)(m_{\gamma_1})\bigr) \\ 2\bigl((m-m_1)(m_{\gamma_2})+(m-m_2)(m_{\gamma_2}-m_2^{\prime})\bigr) \ebmat
		\ee
		where $m_{\gamma_{\imath}}=\pdydx{m}{\gamma_{\imath}}$, for $\imath=\{1,2\}$.
		We notice that the points $(\gamma_1^{*},\;\gamma_2^{*})$ for which $f(\gamma_1^{*},\;\gamma_2^{*})=0$ are also the points where the gradient of $f$ vanishes because $m(\gamma_1^{*},\;\gamma_2^{*})=m_1(\gamma_1^{*})=m_2(\gamma_2^{*})$. The jacobian of $f$ is:
			\be
		\nabla^2 f(\gamma_1,\gamma_2) =\bbmat f_{\gamma_1^2 } & f_{\gamma_1 \gamma_2} \\f_{\gamma_1 \gamma_2} & f_{\gamma_2^2}  \ebmat
		\ee
		where 
		\be
		\baa
		f_{\gamma_1^2}=&2\bigl ( (m_{\gamma_1}-m_1^{'})^2 + (m-m_1)(m_{\gamma_1^2}-m_1^{''}) +(m_{\gamma_1})^2\\
		&(m-m_2)m_{\gamma_1^2}\bigr )\\
		f_{\gamma_2^2}=&2\bigl ( (m_{\gamma_2}-m_2^{'})^2 + (m-m_2)(m_{\gamma_2^2}-m_1^{''}) +(m_{\gamma_2})^2\\
		&(m-m_1)m_{\gamma_2^2}\bigr )\\
		f_{\gamma_1 \gamma_2}=&2\bigl ( (m_{\gamma_1}-m_1^{'})m_{\gamma_2} + (m_{\gamma_2}-m_2^{'})m_{\gamma_1}\\
		&(2m-m_1-m_2)m_{\gamma_1 \gamma_2}\bigr )
		\eaa
		\ee
		We have the determinant:
		\be
		det (\nabla^2 f(\gamma_1,\gamma_2))|_{(\gamma_1^{*},\;\gamma_2^{*})}=\Bigl (4 (m_{\gamma_1})^2(m_{\gamma_2})^2\Bigr)|_{(\gamma_1^{*},\;\gamma_2^{*})} \ge 0.
		\ee
		This implies $f(\gamma_1,\gamma_2)$ is locally convex around the points $(\gamma_1^{*},\;\gamma_2^{*})$.
		It is very easy to see from the definition of $f(\gamma_1^{*},\;\gamma_2^{*})$ why the solutions to $f(\gamma_1,\;\gamma_2)=0$ give coordinates of the points at which the two boundaries have a common tangent. 
		\end{proof}

		A nonlinear solver can provide the solutions to $f(\gamma_1,\gamma_2)=0$ with appropriate initial conditions. 
		
		Let $P_r(\bar{\calO}_k)$ denote the perimeter of the boundary $\partial \bar{O}_k$. To find the common tangents, we parameterize the boundary $\partial \bar{O}_k$ by $\calY_k: [0,\;P_r(\bar{\calO}_k)] \rightarrow \bR^2$ given as:
		\be
		\baa
		\calY_k(\gamma) = &\displaystyle \sum_{\ell=1}^{M_k} \tilde{\mathbb{B}}_{k}^\ell(\gamma)\left (\mathbf{r}_{ok}^\ell+\rho_{\bar{o}} \hat{\mathbf{o}}(\psi_k^\ell(\gamma))\right ) \\
		&+ \bar{\mathbb{B}}_{k}^\ell(\gamma) \left( \mathbf{r}_{\bar{o}k}^{2\ell}+\alpha_{k}^\ell(\gamma) (\mathbf{r}_{\bar{o}k}^{2\ell+1}-\mathbf{r}_{\bar{o}k}^{2\ell})\right) 
		\eaa
		\ee
		where $\alpha_{k}^\ell(\gamma)=\frac{\gamma-\gamma_{\bar{o}k}^{2\ell}}{\gamma_{\bar{o}k}^{2\ell+1}-\gamma_{\bar{o}k}^{2\ell}}$, $\tilde{\mathbb{B}}_{k}^\ell(\gamma)$ is 1 when $\gamma_{\bar{o}k}^{2 \ell-1}<\gamma<\gamma_{\bar{o}k}^{2\ell}$ and 0 otherwise, $\bar{\mathbb{B}}_{k}^\ell(\gamma)$ is 1 when $\gamma_{\bar{o}k}^{2\ell}<\gamma<\gamma_{\bar{o}k}^{2\ell+1}$ and 0 otherwise, and  $\psi_k^\ell(\gamma)=\psi_{\bar{o}k}^{2\ell-1}+\alpha_k^\ell(\gamma)(\psi_{\bar{o}k}^{2\ell}-\psi_{\bar{o}k}^{2\ell-1})$,  where $\gamma_{\bar{o}k}^l$ and $\psi_{\bar{o}k}^l$ are the parameters as defined in Fig.~\ref{fig:shortest_path} corresponding to the common points $\mathbf{r}_{\bar{o}k}^l$ of circular and straight line segments on $\partial \bar{\calO}_k$. Since all the common tangents would be the ones on the circular arcs of the approximated obstacles $\bar{\calO}_k$ and $\bar{\calO}_{k'}$, we initialize $\gamma_k$, $\gamma_{k'}$ to the values that correspond to the circular arcs of the boundaries $\partial \bar{\calO}_k$ and $\partial \bar{\calO}_k'$ in Lemma \ref{lem:common_tangent_fun}. 	
		
		To reduce the search space for finding the shortest paths in the presence of circular obstacles, \cite{kim2004shortest} proposes two filters: ellipse and convex-hull filter. We extend the ellipse filter to general convex polygonal obstacles. Let $L_s(\mathbf{r}_0,\mathbf{r}_f)$ be the length of the geodesic between the two points $\mathbf{r}_0$ and $\mathbf{r}_f$ (solid red path in Fig. \ref{fig:shortest_path}), and $L(\mathbf{r}_0,\mathbf{r}_f)$ be length of the straight line between $\mathbf{r}_0$ and $\mathbf{r}_f$ (dotted pink line in Fig. \ref{fig:shortest_path}). 
		\begin{figure}[h]
			\centering
		\includegraphics[width=.8\linewidth,trim={8.6cm 1.6cm 7.5cm 1.6cm},clip]{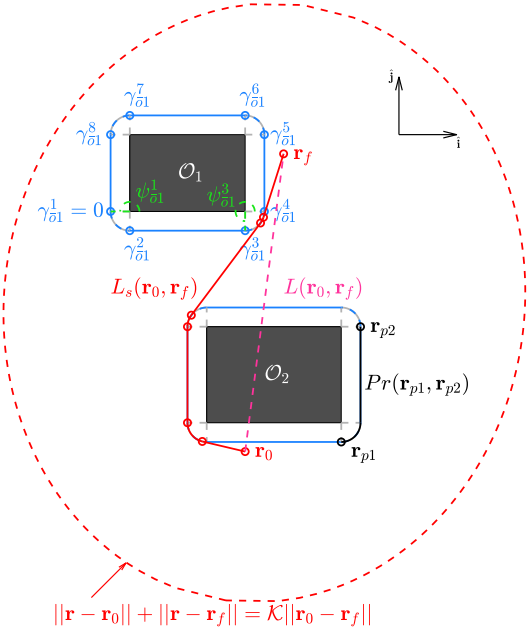}
			\caption{Shortest Path in an Obstacle Environment}
			\label{fig:shortest_path}
		\end{figure}
		Denote the minimum distance between the centroid of $\calO_k$ and any point on $\bar{\calO}_k$ by $\rho_k^{in}$. Consider two points  $\mathbf{r}_{p_1}$  and  $\mathbf{r}_{p_2}$ on $\partial \bar{\calO}_k$. Let $P_r(\mathbf{r}_{p_1},\mathbf{r}_{p_2})$ be the perimeter of boundary curve between  $\mathbf{r}_{p_1}$  and  $\mathbf{r}_{p_2}$, i.e., the smallest distance between the points $\mathbf{r}_{p_1}$ and $\mathbf{r}_{p_2}$ along the boundary $\partial \bar{\calO}_k$ (Fig.~\ref{fig:shortest_path}).
		
		\begin{lemma}One has:
			$L_s(\mathbf{r}_0,\mathbf{r}_f) \le \calK L(\mathbf{r}_0,\mathbf{r}_f)$, where $$\calK = \displaystyle \max_{k \in I_o} \left \{ \max_{\mathbf{r}_{p_1},\mathbf{r}_{p_2} \in \partial \bar{\calO}_k} \left \{\frac{P_r(\mathbf{r}_{p_1},\mathbf{r}_{p_2})}{R_{p_1}^{p_2}}  \right \} \right \} \le \max_{k \in I_o} \left\{\frac{P_r( \bar{\calO}_k)}{4\rho_k^{in}} \right \}$$ with $P_r(\bar{\calO}_k)$ being the perimeter of $\bar{\calO}_k$.
		\end{lemma}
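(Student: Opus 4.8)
The plan is to prove the two inequalities in the statement separately; only the second needs a genuinely geometric argument. For $L_s(\mathbf{r}_0,\mathbf{r}_f)\le \calK L(\mathbf{r}_0,\mathbf{r}_f)$ I would not reason about the geodesic itself but exhibit one explicit feasible path whose length is at most $\calK L(\mathbf{r}_0,\mathbf{r}_f)$; since $L_s$ is the length of the \emph{shortest} feasible path this already suffices. Let $\ell$ be the segment $\overline{\mathbf{r}_0\mathbf{r}_f}$, of length $L:=L(\mathbf{r}_0,\mathbf{r}_f)$. Because each $\bar{\calO}_k$ is convex, $\ell$ meets each inflated obstacle it intersects in a single sub-segment, with endpoints $\mathbf{r}_{p_1},\mathbf{r}_{p_2}\in\partial\bar{\calO}_k$. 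Define the path $\Pi_0$ that follows $\ell$ wherever $\ell\subset\calW_{free}$ and, on every sub-segment of $\ell$ interior to some $\bar{\calO}_k$, detours along the shorter boundary arc of $\partial\bar{\calO}_k$ joining $\mathbf{r}_{p_1}$ to $\mathbf{r}_{p_2}$ (if two inflated obstacles overlap, follow the boundary of their union instead, which only shortens things). Then $\Pi_0$ is a continuous, obstacle-free path from $\mathbf{r}_0$ to $\mathbf{r}_f$, so $L_s\le \mathrm{len}(\Pi_0)$. Writing $L_{f}$ for the length of $\ell\cap\calW_{free}$, we have $L=L_{f}+\sum_k\norm{\mathbf{r}_{p_1}-\mathbf{r}_{p_2}}$ (sum over the inflated obstacles met by $\ell$) while $\mathrm{len}(\Pi_0)=L_{f}+\sum_k P_r(\mathbf{r}_{p_1},\mathbf{r}_{p_2})$. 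By the definition of $\calK$ every detour arc satisfies $P_r(\mathbf{r}_{p_1},\mathbf{r}_{p_2})\le\calK\norm{\mathbf{r}_{p_1}-\mathbf{r}_{p_2}}$, and since any boundary arc is at least as long as its chord we have $\calK\ge1$, hence $L_{f}\le\calK L_{f}$. Therefore $L_s\le\mathrm{len}(\Pi_0)\le\calK\bigl(L_{f}+\sum_k\norm{\mathbf{r}_{p_1}-\mathbf{r}_{p_2}}\bigr)=\calK L$.

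For $\calK\le\max_k P_r(\bar{\calO}_k)/(4\rho_k^{in})$ it suffices to show, for each $k$ and each $\mathbf{r}_{p_1},\mathbf{r}_{p_2}\in\partial\bar{\calO}_k$, that $P_r(\mathbf{r}_{p_1},\mathbf{r}_{p_2})/R_{p_1}^{p_2}\le P_r(\bar{\calO}_k)/(4\rho_k^{in})$. Since $P_r(\mathbf{r}_{p_1},\mathbf{r}_{p_2})$ is by definition the shorter of the two boundary arcs, $P_r(\mathbf{r}_{p_1},\mathbf{r}_{p_2})\le\tfrac12 P_r(\bar{\calO}_k)$, so the claim reduces to the purely Euclidean chord bound $R_{p_1}^{p_2}\ge \frac{4\rho_k^{in}}{P_r(\bar{\calO}_k)}\,P_r(\mathbf{r}_{p_1},\mathbf{r}_{p_2})$. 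Two facts drive this. First, because the centroid of $\calO_k$ lies in the interior of the convex set $\bar{\calO}_k$, convexity gives that the disk of radius $\rho_k^{in}$ about that centroid is contained in $\bar{\calO}_k$, so the width of $\bar{\calO}_k$ in every direction is at least $2\rho_k^{in}$. Second, if $\alpha\in[0,\pi]$ is the angle at the centroid in the triangle $\mathbf{r}_{p_1}$–centroid–$\mathbf{r}_{p_2}$, then using $\norm{\mathbf{r}_{p_i}-(\text{centroid})}\ge\rho_k^{in}$, the law of cosines and $a^2+b^2\ge 2ab$ one gets $R_{p_1}^{p_2}\ge 2\rho_k^{in}\sin(\alpha/2)$. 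These already dispose of every configuration in which the shorter arc does not wind far around the centroid; the extremal case is when $\mathbf{r}_{p_1},\mathbf{r}_{p_2}$ nearly bisect the perimeter, and there one argues that a perimeter-bisecting chord cannot lie near an extreme point of $\bar{\calO}_k$ (otherwise one of the two arcs it subtends would be far shorter than the other), so it traverses the inscribed disk and hence $R_{p_1}^{p_2}\ge 2\rho_k^{in}$; combined with $P_r(\mathbf{r}_{p_1},\mathbf{r}_{p_2})\le\tfrac12 P_r(\bar{\calO}_k)$ this gives the ratio bound, and taking the maximum over $k\in I_o$ finishes.

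The step I expect to be the real obstacle is this last chord estimate, i.e. establishing $R_{p_1}^{p_2}\ge \frac{4\rho_k^{in}}{P_r(\bar{\calO}_k)}P_r(\mathbf{r}_{p_1},\mathbf{r}_{p_2})$ uniformly over closed convex $\calC^1$ curves. It is tight (equality for a disk with $\mathbf{r}_{p_1},\mathbf{r}_{p_2}$ antipodal, and asymptotically for very elongated inflated polygons), so the proof must use \emph{both} that $P_r(\mathbf{r}_{p_1},\mathbf{r}_{p_2})$ is the shorter arc and that $\bar{\calO}_k$ contains the disk $B(\text{centroid},\rho_k^{in})$; the quantitative control needed is on how far a perimeter-bisecting chord can be offset from the centroid, which I would obtain from the fact that each half of $\partial\bar{\calO}_k$ has length at least twice its own extent transverse to that chord.
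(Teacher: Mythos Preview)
The paper's own proof is a single sentence: it cites Lemma~1 of \cite{kim2004shortest} (which treats \emph{circular} obstacles) and restates the definition of $\calK$. No argument specific to the inflated polygonal obstacles $\bar{\calO}_k$ is given, and in particular the upper bound $\calK\le\max_k P_r(\bar{\calO}_k)/(4\rho_k^{in})$ is asserted without justification. So your proposal already goes well beyond what the paper supplies.

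Your treatment of the first inequality, $L_s(\mathbf{r}_0,\mathbf{r}_f)\le\calK L(\mathbf{r}_0,\mathbf{r}_f)$, is correct and is exactly the idea behind the cited lemma: replace each sub-chord of $\overline{\mathbf{r}_0\mathbf{r}_f}$ that enters an obstacle by the shorter boundary arc, observe that each detour inflates length by at most the factor $\calK$, and use $\calK\ge 1$ on the free portion. There is nothing to add here.

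For the second inequality your plan has a genuine gap, which you yourself anticipate. The two ingredients you propose --- $P_r(\mathbf{r}_{p_1},\mathbf{r}_{p_2})\le\tfrac12 P_r(\bar{\calO}_k)$ and $R_{p_1}^{p_2}\ge 2\rho_k^{in}\sin(\alpha/2)$ with $\alpha$ the central angle at the centroid --- do not combine to give $R_{p_1}^{p_2}\ge \tfrac{4\rho_k^{in}}{P_r(\bar{\calO}_k)}P_r(\mathbf{r}_{p_1},\mathbf{r}_{p_2})$ in the intermediate regime. A concrete obstruction: take a stadium of half-length $a=10w$ and half-width $w$ (so $\rho_k^{in}=w$, $P_r=40w+2\pi w$), and let $\mathbf{r}_{p_1}=(5w,w)$, $\mathbf{r}_{p_2}=(5w,-w)$. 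Then $R_{p_1}^{p_2}=2w$, the shorter arc has length $10w+\pi w$, the central angle is $\alpha=2\arctan(2/10)\approx 0.39$, and your law-of-cosines bound yields only $R_{p_1}^{p_2}\ge 2w\sin(\alpha/2)\approx 0.39w$, whereas what you need is $R_{p_1}^{p_2}\ge \tfrac{4w(10+\pi)}{40+2\pi}\approx 1.14w$. The pair is far from perimeter-bisecting ($A\approx 0.57\cdot P_r/2$), so your ``extremal case'' argument does not apply either. The angle $\alpha$ at the centroid simply does not control $P_r(\mathbf{r}_{p_1},\mathbf{r}_{p_2})$ for elongated bodies, so a different invariant is needed --- for instance, working with the width of $\bar{\calO}_k$ in the direction of the chord rather than with $\alpha$. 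That said, the paper does not supply any such argument either; it merely points to the circular-obstacle case, for which $P_r(\bar{\calO}_k)/(4\rho_k^{in})=\pi/2$ and the bound is immediate.
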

		\begin{proof}
			The proof follows from the proof of Lemma 1 in \cite{kim2004shortest}. The factor $\calK$ is the maximum ratio of the shortest perimeter with the distance of two points on the boundary of any obstacle. 	
		\end{proof}
		
		The ellipse filter to find the shortest path can be applied by considering the obstacles lying completely inside the ellipse as given in Corollary 1 \cite{kim2004shortest}.
		\begin{corollary}
			The shortest path between two points $\mathbf{r}_p$ and $\mathbf{r}_q$ lies inside the ellipse defined by $
			\norm{\mathbf{r}_p-\mathbf{r}}+\norm{\mathbf{r}_q-\mathbf{r}}=\calK_m L_s({\mathbf{r}_p,\mathbf{r}_q}),$
			where $\calK_m = \max_{k \in I_o} \left\{\frac{P_r( \bar{\calO}_k)}{4\rho_k^{in}} \right\} $.
		\end{corollary}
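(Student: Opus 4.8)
The plan is to prove the stronger, point-wise statement that every point $\mathbf{r}$ lying on the shortest path (geodesic) between $\mathbf{r}_p$ and $\mathbf{r}_q$ in $\calW_{free}$ satisfies $\norm{\mathbf{r}_p-\mathbf{r}}+\norm{\mathbf{r}_q-\mathbf{r}}\le \calK_m L_s(\mathbf{r}_p,\mathbf{r}_q)$; since the locus of such $\mathbf{r}$ is exactly the closed region enclosed by the stated ellipse, this shows the entire shortest path is contained in it. This parallels the proof of Corollary~1 in \cite{kim2004shortest}, the only new ingredient being that the circle-specific constant there is replaced by $\calK_m=\max_{k\in I_o}\{P_r(\bar{\calO}_k)/(4\rho_k^{in})\}$ supplied by the preceding Lemma.

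First I would fix an arbitrary point $\mathbf{r}$ on the geodesic $\pi^\star$ joining $\mathbf{r}_p$ to $\mathbf{r}_q$ and cut $\pi^\star$ at $\mathbf{r}$ into two sub-arcs, $\pi^\star_1$ from $\mathbf{r}_p$ to $\mathbf{r}$ and $\pi^\star_2$ from $\mathbf{r}$ to $\mathbf{r}_q$. The key structural step is sub-path optimality: each $\pi^\star_i$ is itself a geodesic between its own endpoints in $\calW_{free}$, since otherwise replacing it by a shorter obstacle-free arc with the same endpoints would produce a path from $\mathbf{r}_p$ to $\mathbf{r}_q$ strictly shorter than $\pi^\star$, contradicting its optimality. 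Hence the lengths add along the concatenation, giving $L_s(\mathbf{r}_p,\mathbf{r})+L_s(\mathbf{r},\mathbf{r}_q)=L_s(\mathbf{r}_p,\mathbf{r}_q)$.

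Next I would invoke the elementary fact that a straight segment is never longer than any curve sharing its endpoints, so $\norm{\mathbf{r}_p-\mathbf{r}}=L(\mathbf{r}_p,\mathbf{r})\le L_s(\mathbf{r}_p,\mathbf{r})$ and $\norm{\mathbf{r}_q-\mathbf{r}}=L(\mathbf{r},\mathbf{r}_q)\le L_s(\mathbf{r},\mathbf{r}_q)$. Summing these and using the length identity of the previous step yields $\norm{\mathbf{r}_p-\mathbf{r}}+\norm{\mathbf{r}_q-\mathbf{r}}\le L_s(\mathbf{r}_p,\mathbf{r}_q)$. Since each $\bar{\calO}_k$ contains a disc of radius $\rho_k^{in}$ (centered at the centroid of $\calO_k$) and perimeter is monotone for nested convex sets, $P_r(\bar{\calO}_k)\ge 2\pi\rho_k^{in}$, so $\calK_m\ge \pi/2\ge 1$; therefore $\norm{\mathbf{r}_p-\mathbf{r}}+\norm{\mathbf{r}_q-\mathbf{r}}\le \calK_m L_s(\mathbf{r}_p,\mathbf{r}_q)$, i.e.\ $\mathbf{r}$ lies in the closed ellipse. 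As $\mathbf{r}$ was arbitrary on $\pi^\star$, the whole shortest path lies inside the ellipse, which is moreover non-degenerate because $\calK_m L_s(\mathbf{r}_p,\mathbf{r}_q)\ge L_s(\mathbf{r}_p,\mathbf{r}_q)\ge \norm{\mathbf{r}_p-\mathbf{r}_q}$.

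I expect the main obstacle to be purely one of rigor in the sub-path/concatenation argument: one should state precisely that a length-minimizing path between two points of the closed set $\calW_{free}$ exists and is rectifiable, and that cutting it at an interior point and re-splicing keeps the pieces inside $\calW_{free}$ (immediate here, since both sub-arcs are just portions of $\pi^\star$). The remaining steps — the triangle-type inequality and $\calK_m\ge 1$ — are routine. I would additionally remark that, to use this ellipse as a filter \emph{before} $L_s(\mathbf{r}_p,\mathbf{r}_q)$ has actually been computed, one may replace the right-hand side by the a priori computable upper bound $\calK_m\,L(\mathbf{r}_p,\mathbf{r}_q)\ge L_s(\mathbf{r}_p,\mathbf{r}_q)$ furnished by the preceding Lemma.
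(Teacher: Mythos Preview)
Your proposal is correct and is precisely the standard argument underlying Corollary~1 of \cite{kim2004shortest}, to which the paper itself defers without giving an independent proof: sub-path optimality plus the straight-line lower bound give $\norm{\mathbf{r}_p-\mathbf{r}}+\norm{\mathbf{r}_q-\mathbf{r}}\le L_s(\mathbf{r}_p,\mathbf{r}_q)$, and then $\calK_m\ge\pi/2>1$ (from perimeter monotonicity over the inscribed disc of radius $\rho_k^{in}$) yields the claim. Your closing remark that the computable ellipse with parameter $\calK_m L(\mathbf{r}_p,\mathbf{r}_q)\ge L_s(\mathbf{r}_p,\mathbf{r}_q)$ also works---and is in fact tighter than the stated $\calK_m L_s$ ellipse---is a nice observation that the paper does not make explicit.
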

		
		The $\calC^1$-Tangent graph $\calG_{ct}$ can be computed offline, and stored for finding the shortest paths using Dijkstra's algorithm \cite{cormen2009introduction}. 
		Any shortest path $\mathbf{P}_\jmath$, obtained using the $\calC^1$-Tangent graph, is associated with mappings $\mathscr{P}_\jmath:[0,\Gamma^\jmath] \rightarrow \bR^2$ and $\vartheta_\jmath:$ $[0,\Gamma^\jmath]\rightarrow [0,2\pi]$, where $\Gamma^\jmath$ is the total length of the path $\mathbf{P}_\jmath$. Here $\mathscr{P}_\jmath(\gamma_\jmath)$ gives the Cartesian coordinates, and $\vartheta_\jmath(\gamma_\jmath)$ gives the direction of the tangent to the path at the location reached after traveling $\gamma_\jmath$ distance along the path from the initial position. 
		%These mappings are used in the design of the desired open formation, which in turn is based on the attackers' shortest path.\\
		
		When there are more than one agents moving on different shortest paths found using the $\calC^1$-Tangent graph, they might collide with each other. Whether they will collide with each other or not depends on whether the corresponding shortest paths intersect with each other. We have the following result regarding the intersection of the shortest paths.
		\begin{lemma}\label{lem:unique_collision_segment}
	Let $\mathbf{P}_1$ be the shortest path between the points $\mathbf{r}_{11}$ and $\mathbf{r}_{12}$,  and $\mathbf{P}_2$ be the shortest path between the points $\mathbf{r}_{21}$ and $\mathbf{r}_{22}$. If $\mathbf{P}_1$ and $\mathbf{P}_2$ are obtained using the $\calC^1$-Tangent Graph then they intersect at most once.
	\end{lemma}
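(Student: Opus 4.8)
The plan is to argue by contradiction, using the fact that every sub-path of a shortest path is itself a shortest path together with a local shortcutting argument. Suppose $\mathbf{P}_1$ and $\mathbf{P}_2$ intersected at two distinct points $\mathbf{r}_a$ and $\mathbf{r}_b$, and relabel so that $\mathbf{r}_a$ precedes $\mathbf{r}_b$ along $\mathbf{P}_1$. Let $\mathbf{P}_1^{ab}$ and $\mathbf{P}_2^{ab}$ denote the portions of $\mathbf{P}_1$ and $\mathbf{P}_2$ joining $\mathbf{r}_a$ to $\mathbf{r}_b$. Since $\mathbf{P}_1$ and $\mathbf{P}_2$ are shortest (collision-free) paths in $\mathcal{W}\backslash(\cup_k \bar{\calO}_k)$ and each $\bar{\calO}_k$ is convex, $\mathbf{P}_1^{ab}$ and $\mathbf{P}_2^{ab}$ are both shortest paths between $\mathbf{r}_a$ and $\mathbf{r}_b$ --- otherwise splicing a strictly shorter $\mathbf{r}_a$--$\mathbf{r}_b$ connection into $\mathbf{P}_1$ (resp. $\mathbf{P}_2$) would contradict its optimality. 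In particular, $\mathbf{P}_1^{ab}$ and $\mathbf{P}_2^{ab}$ have the same length.

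Next I would form the spliced path $\tilde{\mathbf{P}}_1$ that follows $\mathbf{P}_1$ from $\mathbf{r}_{11}$ to $\mathbf{r}_a$, then follows $\mathbf{P}_2^{ab}$ from $\mathbf{r}_a$ to $\mathbf{r}_b$, and finally follows $\mathbf{P}_1$ from $\mathbf{r}_b$ to $\mathbf{r}_{12}$. Being a concatenation of collision-free pieces with $|\mathbf{P}_2^{ab}|=|\mathbf{P}_1^{ab}|$, the path $\tilde{\mathbf{P}}_1$ is collision-free and has length exactly $|\mathbf{P}_1|$, i.e. it is also a shortest $\mathbf{r}_{11}$--$\mathbf{r}_{12}$ path. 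The key observation is that $\tilde{\mathbf{P}}_1$ switches at $\mathbf{r}_a$ (and at $\mathbf{r}_b$) from the tangent direction of $\mathbf{P}_1$ to that of $\mathbf{P}_2$; if the two original paths cross transversally there, these directions differ and $\tilde{\mathbf{P}}_1$ has a genuine corner at $\mathbf{r}_a$.

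To close the argument I would show that a transversal intersection point $\mathbf{r}_a$ must lie in the open obstacle-free region. Along any boundary $\partial\bar{\calO}_k$ a shortest path is locally either the boundary arc itself or a straight line tangent to $\partial\bar{\calO}_k$; since that tangent line is a supporting line of the convex set $\bar{\calO}_k$, two shortest paths passing through a common boundary point both lie on the same (non-obstacle) side of it and therefore cannot cross transversally there. Hence $\mathbf{r}_a\in\mathcal{W}\backslash(\cup_k\bar{\calO}_k)$, a small disk around $\mathbf{r}_a$ is obstacle-free, and replacing the corner of $\tilde{\mathbf{P}}_1$ inside that disk by a straight chord produces a strictly shorter collision-free $\mathbf{r}_{11}$--$\mathbf{r}_{12}$ path, contradicting the optimality of $\mathbf{P}_1$. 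Therefore $\mathbf{P}_1$ and $\mathbf{P}_2$ intersect at most once.

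The step I expect to require the most care is the degenerate case in which $\mathbf{P}_1$ and $\mathbf{P}_2$ are mutually tangent at the intersection point(s) rather than crossing transversally: then $\tilde{\mathbf{P}}_1$ has no corner and the shortcut fails. In that situation I would instead note that $\mathbf{P}_1^{ab}$ and $\mathbf{P}_2^{ab}$ are two distinct equal-length geodesics between $\mathbf{r}_a$ and $\mathbf{r}_b$ bounding a region whose turning balance (the turning-number theorem for piecewise-$\mathcal{C}^1$ simple closed curves), combined with the fact that a geodesic can only bend \emph{toward} an obstacle and never away from it, forces that region to contain some $\bar{\calO}_k$; tracing the two geodesics around that obstacle then shows they must leave it along the two distinct tangent lines to $\bar{\calO}_k$ and hence actually cross transversally, reducing to the case already handled. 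If one is content to assume the shortest paths are in general position (so intersections are transversal), the short argument of the first three paragraphs is complete and this case analysis can be omitted.
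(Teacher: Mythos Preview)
Your argument is correct in outline and is substantially more rigorous than the paper's own proof. The paper proceeds by contradiction as you do, but its Case~(2) is essentially an argument by example: it displays a single figure with two specific paths (with numerical lengths $26.89\,\mathrm{m}$, $25.00\,\mathrm{m}$, etc.) that intersect twice, exhibits a shorter alternative $\bar{\mathbf{P}}_1$ that intersects $\mathbf{P}_2$ only once, and asserts that ``the same observation can be made in case of multiple obstacles.'' No general mechanism is given for constructing the shorter path.

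Your approach supplies exactly that mechanism. The sub-path optimality step (so $|\mathbf{P}_1^{ab}|=|\mathbf{P}_2^{ab}|$), the splicing, and the chord-shortcut at a transversal corner together give a clean, obstacle-independent contradiction that the paper only illustrates. Your observation that a transversal crossing cannot occur on $\partial\bar{\calO}_k$ (because any shortest path through a boundary point is tangent to the convex boundary there) is the right way to justify that the corner sits in the open free region, and is absent from the paper.

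The one place your write-up is still soft is the tangential (non-transversal) case. The turning-number sketch is plausible but not yet a proof; in particular you should be explicit that the two equal-length geodesics $\mathbf{P}_1^{ab}$ and $\mathbf{P}_2^{ab}$ bound a simple region (they may a~priori share arcs, in which case take $\mathbf{r}_a$, $\mathbf{r}_b$ to be the endpoints of a maximal sub-interval on which they differ), and then make precise the claim that a shortest path can curve only toward an obstacle. Once that is done your argument is strictly stronger than the paper's, which sidesteps this degeneracy entirely.
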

		\begin{proof} %We show this by considering the intersection segments on $\mathbf{P}_1$ and $\mathbf{P}_2$.
			Case (1) - Both paths $\mathbf{P}_1$ and $\mathbf{P}_2$ are straight-line segments, i.e., there is no circular segment on either of these paths. This case is trivial as two lines intersect at most once, so any two line-segments of these paths would intersect each other at most once. 
			%Hence there would be maximum of one intersection/collision segment on such paths $\mathbf{P}_1$ and $\mathbf{P}_2$.
			
			%READ UP TO HERE
			
			Case (2) - At least one of the two shortest paths has one or more circular segments.
			We prove this case by contradiction. Let us assume that there are 2 distinct collision segments on the two paths $\mathbf{P}_1$ and $\mathbf{P}_2$. This is possible when these two paths converge towards each other, then diverge and then converge again. A possible scenario for this to happen is shown in Fig \ref{fig:shortestPathContradiction}. Suppose $\mathbf{P}_1$ (solid blue, $\Gamma(\mathbf{P}_1)=26.89 m$) and $\mathbf{P}_2$ (solid red, $\Gamma(\mathbf{P}_2)=25.00 m$) are the two shortest paths joining $\mathbf{r}_{11}$ to $\mathbf{r}_{12}$ and $\mathbf{r}_{21}$ to $\mathbf{r}_{22}$, respectively. These two paths as can be seen in Fig. \ref{fig:shortestPathContradiction} have two intersections. 
			\begin{figure}
				\centering
				\includegraphics[width=.9\linewidth,trim={5cm 2.7cm 3.5cm 7.5cm},clip]{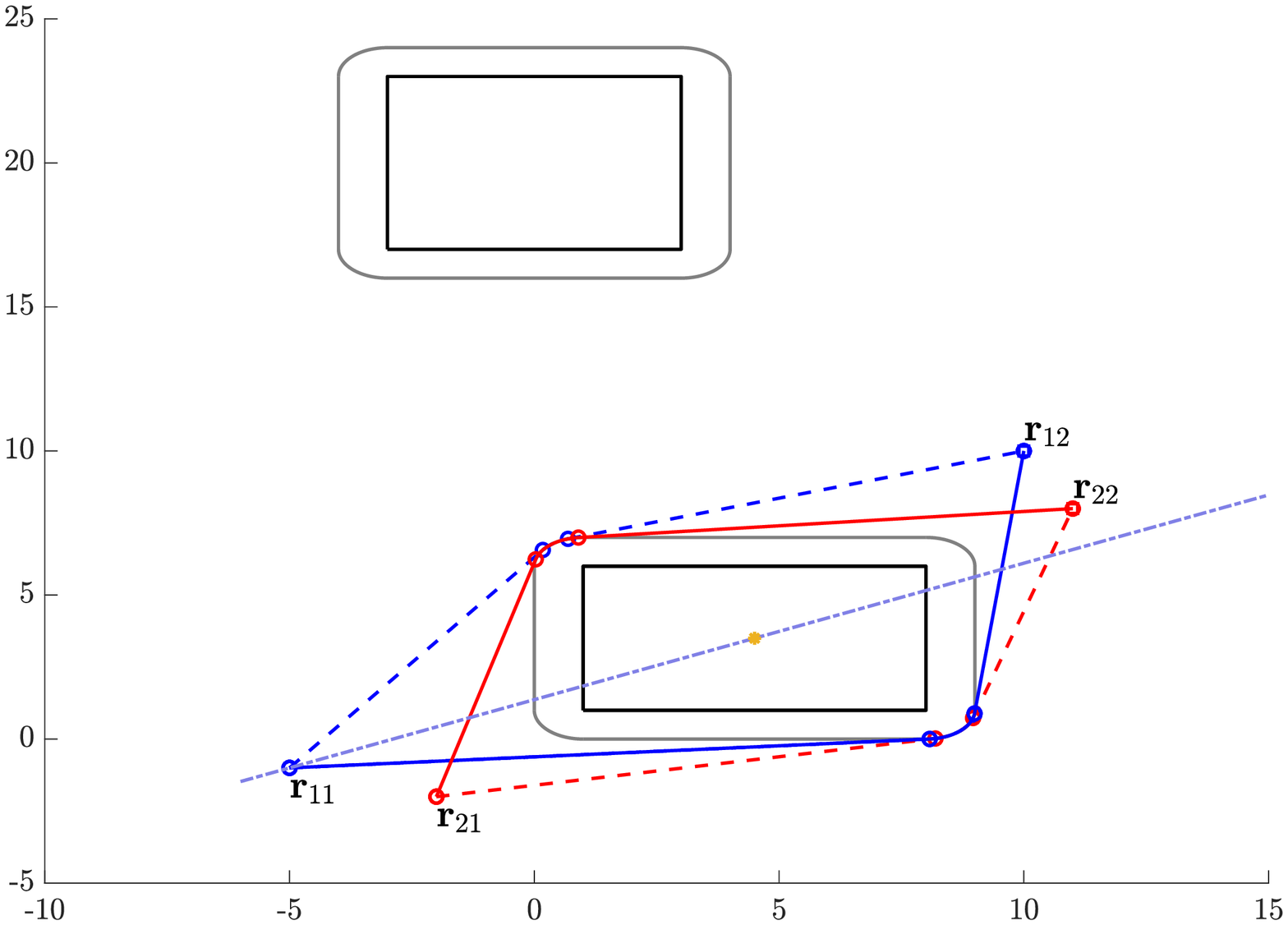}
				\caption{Shortest path intersection}
				\label{fig:shortestPathContradiction}
			\end{figure}
			However, there exists a path $\bar{\mathbf{P}}_1$ (dotted blue) with total length $\Gamma(\bar{\mathbf{P}}_1)=22.61m$ which would be the shortest path between $\mathbf{r}_{11}$ to $\mathbf{r}_{12}$. This path has only one intersection with $\mathbf{P}_2$. Similar argument can be applied to $\mathbf{P}_2$ and $\bar{\mathbf{P}}_2$ (dotted red, $\Gamma(\bar{\mathbf{P}}_2)=24.97 m$). The same observation can be made in case of multiple obstacles. This implies that the paths $\mathbf{P}_1$ and $\mathbf{P}_2$ intersect at most once.
		\end{proof}

		\section{Near time-optimal velocity profiles under bounded acceleration}\label{sec:time-optimal_velocity_profile1}
		Consider the shortest path $\mathbf{P}_0^f$ from $\mathbf{r}_0$ to $\mathbf{r}_f$ consisting of $N_{\mathbf{P}}^c$ circular arc segments, and $N_{\mathbf{P}}^s = N_{\mathbf{P}}^c+1$ straight line segments. Let $v_{2l-1}$ and $v_{2l}$ be the speeds at the endpoints of the $l^{th}$ circular segment, respectively, while moving forward along the path starting at the initial position. We formulate the problem of finding a near time-optimal velocity profile on the path from point $\mathbf{r}_0$ to point $\mathbf{r}_f$ under bounded acceleration as a problem of finding the feasible terminal speed vector $\tilde{\mathbf{v}}=[v_1,\; v_2, ...v_{2N_{\mathbf{P}}^c+1}, v_{2N_{\mathbf{P}}^c+2}]$ that minimizes the total time of travel. In the following, we discuss time-optimal control on straight line and and near time-optimal control on circular segments with specified terminal speeds under acceleration constraint.

		\underline{Minimum time on a straight line segment}:
		For a straight line segment of length $\Gamma_1$ and the terminal speeds $v(0)=v_{0}$ and $v(\Gamma_1)=v_{f}$, the time-optimal control operates at either extremes \cite{bryson1969applied,peng2005coordinating}, i.e., $\norm{\mathbf{u}} = \bar{u}$. The minimum time under this control action after integration is:
		\be
		\baa
		\tau_1 (\Gamma_1,v_0,v_f) =& \frac{1}{\sqrt{\bar{u}C_D}} \left(\tanh^{-1} \left(\frac{v_{sw}}{\bar{v}_d} \right) + \tan^{-1} \left(\frac{v_{sw}}{\bar{v}_d} \right)\right.\\
		& \left. - \tanh^{-1} \left(\frac{v_0}{\bar{v}_d} \right) -\tan^{-1} \left(\frac{ v_f}{\bar{v}_d} \right)\right),
		\eaa
		\ee
		where $v_{sw}=\sqrt{\frac{(\lambda-1)\bar{u}}{(\lambda+1)C_D}}$ is the speed at which the control action switches from one extreme to the other, and $\lambda=\left (\frac{\bar{u}+C_Dv_f^2}{\bar{u}-C_Dv_0^2} \right )e^{2C_D\Gamma_1}$.
		% $\bar{v}_d=\sqrt{\frac{\bar{u}}{C_D}}$. 
		
		\underline{Approximate minimum time on a circular segment}:
		The dynamics of an agent along the path parametrized by $\bm{f}(\gamma)$ is:
		\be
		\arraycolsep=0.5mm
		\baa
		\dot{\mathbf{r}}
		=&\bm{f}'(\gamma) v,\\
		\ddot{\mathbf{r}}=&\bm{f}'(\gamma) \dot{v} +\bm{f}''(\gamma) v^2,
		\eaa
		\ee
		where $\gamma$ is the distance traveled along the path from the initial position, and $v$ is the speed. Similar to \cite{peng2005coordinating}, we consider double integrator dynamics along the path. But instead of constant bound on the acceleration along the path $\dot{v}$, we consider state dependent constraints on $\dot{v}$ to ensure the acceleration $\mathbf{u}$ satisfies the constraints \eqref{eq:input_constraints}, i.e.,
		\be \label{eq:pathDyn}
		\arraycolsep=0.5mm
		\baa
		\dot{\gamma}=&v, \\
		\dot{v}=&a,
		\eaa
		\ee
		where $\norm{\bm{f}'(\gamma) a + \bm{f}''(\gamma) v^2 +C_D \bm{f}'(\gamma)v^2}<\bar{u}$.	
		For a circular path of radius $\rho_{\bar{o}}$ centered at $\mathbf{r}_c$, we have:
		\be
		\arraycolsep=1pt
		\baa
		%\bm{f}(\gamma)=\mathbf{r}_c+\rho_{\bar{o}}\bbmat \cos(\alpha_0+\frac{\gamma}{\rho_{\bar{o}}}) \\ \sin(\alpha_0+\frac{\gamma}{\rho_{\bar{o}}}) \ebmat\\
		\bm{f}'(\gamma)=\bbmat -\sin(\alpha_0+\frac{\gamma}{\rho_{\bar{o}}}) \\ \cos(\alpha_0+\frac{\gamma}{\rho_{\bar{o}}}) \ebmat, \;
		\bm{f}''(\gamma)=-\frac{1}{\rho_{\bar{o}}} \bbmat \cos(\alpha_0+\frac{\gamma}{\rho_{\bar{o}}}) \\ \sin(\alpha_0+\frac{\gamma}{\rho_{\bar{o}}}) \ebmat
		\eaa
		\ee
		where $\alpha_0$ is the orientation of the initial position vector with respect to $\mathbf{r}_c$. 
		The constraint then reads:
		\be \label{eq:pathAccelConstraints1}
		-C_D v^2 - \sqrt{\bar{u}^2-\frac{v^4}{\rho_{\bar{o}}^2}} \le a \le -C_D v^2 + \sqrt{\bar{u}^2-\frac{v^4}{\rho_{\bar{o}}^2}}.
		\ee
		Suppose a defender has to travel a circular segment of length $\Gamma_2$ and the terminal speeds on this path are $v(0)=v_{0}$ and $v(\Gamma_2)=v_{f}$. The time-optimal control for the trajectories of \eqref{eq:pathDyn} 
		%\textbf{UNCLEAR:} 
		is a bang-bang control with acceleration $a$ operating at its extreme values based on a switching condition \cite{bryson1969applied}. However, this requires integration of the system \eqref{eq:pathDyn} under the extreme accelerations given in \eqref{eq:pathAccelConstraints1} to find the switching condition and required total minimum time. This involves inverting hypergeometric functions, which would be computationally intensive, so we approximate the acceleration bounds as:
		\be  \label{eq:pathAccelConstraints2}
		-C_D v^2 - \left (\frac{\bar{u}^2\rho_{\bar{o}}^2-v^4}{\bar{u}\rho_{\bar{o}}^2} \right) \le a \le -C_D v^2 + \left (\frac{\bar{u}^2\rho_{\bar{o}}^2-v^4}{\bar{u}\rho_{\bar{o}}^2} \right).
		\ee
		It is easy to verify that the acceleration $a$ satisfying the constraints in \eqref{eq:pathAccelConstraints2} also satisfies the constraints in \eqref{eq:pathAccelConstraints1}.	
		\begin{comment}
		%Solution of system dynamics \dot{v}=u-\frac{v^4}{u\rho_{\bar{o}}^2}
		\be
		\baa
		s=&s_0+\frac{\rho_{\bar{o}}}{2} \left( \tanh^{-1} \left(\frac{v^2}{(\bar{v}^c)^2} \right)-  \tanh^{-1}\left(\frac{v_0^2}{(\bar{v}^c)^2} \right) \right)\\
		t=&\frac{1}{2}\sqrt{\frac{\rho_{\bar{o}}}{\bar{u}}} \left(\tan^{-1}\left( \frac{v}{\bar{v}^c}\right)+\tanh^{-1}\left( \frac{v}{\bar{v}^c}\right) \right) \\
		& -\frac{1}{2}\sqrt{\frac{\rho_{\bar{o}}}{\bar{u}}} \left(\tan^{-1}\left( \frac{v_0}{\bar{v}^c}\right)+\tanh^{-1}\left( \frac{v_0}{\bar{v}^c}\right) \right) 
		\eaa
		\ee
		\end{comment}
		
		The minimum time required to travel a circular segment of length $\Gamma_2$ with terminal speeds $v_0$ and $v_f$ after integrating \eqref{eq:pathDyn} with approximate extreme accelerations in \eqref{eq:pathAccelConstraints2} is:
		\be \label{eq:circ_path_opt_time}
		\arraycolsep=0.2mm
		\baa
		\tau_2(\Gamma_2,v_0,v_f)=&\frac{1}{\lambda_3}\left ( \frac{\tan^{-1}\left( \lambda_1^{\circ}v_{sw}\right)}{\lambda_1}+\frac{\tanh^{-1}\left(\lambda_2^{\circ}v_{sw} \right)}{\lambda_2}\right) \\
		& -\frac{1}{\lambda_3}\left ( \frac{\tan^{-1}\left(\lambda_1^{\circ}v_{0}\right)}{\lambda_1}+\frac{\tanh^{-1}\left(\lambda_2^{\circ}v_{0} \right)}{\lambda_2}\right)\\
		& + \frac{1}{\lambda_3}\left ( \frac{\tan^{-1}\left(\lambda_2^{\circ}v_{f}\right)}{\lambda_2}+\frac{\tanh^{-1}\left(\lambda_1^{\circ}v_{f} \right)}{\lambda_1}\right) \\
		& - \frac{1}{\lambda_3}\left ( \frac{\tan^{-1}\left(\lambda_2^{\circ}v_{sw}\right)}{\lambda_2}+\frac{\tanh^{-1}\left(\lambda_1^{\circ}v_{sw} \right)}{\lambda_1}\right),
		\eaa
		\ee
		
		where $v_{sw}=\sqrt{\frac{\kappa_1 \left( e_{\kappa} +1 \right) + \sqrt{(\kappa_1(e_{\kappa}+1))^2-(e_{\kappa}-1)^2(\kappa_1^2-\kappa_2^2)}}{2(e_{\kappa}-1)}}$ is the speed at witch the control action switches from one extreme to the other,  $e_{\kappa}=e^{\frac{\kappa}{\kappa_0}}$, $\kappa=\Gamma_2+2\kappa_0\tanh^{-1}\left(\frac{\kappa_2-2v_0^2}{\kappa_1}\right)-2\kappa_0\tanh^{-1}\left(\frac{\kappa_2-2v_f^2}{\kappa_1}\right)$, $\kappa_0=\frac{\rho_{\bar{o}}}{2\lambda_0}$, $\kappa_1=\rho_{\bar{o}} \bar{u}\lambda_0$, $\kappa_2=C_D\rho_{\bar{o}}^2 \bar{u}$, $\lambda_1=\sqrt{\rho_{\bar{o}} \left( \lambda_0-\rho_{\bar{o}} C_D \right)}$, $\lambda_2=\sqrt{\rho_{\bar{o}} \left( \lambda_0+\rho_{\bar{o}} C_D \right)}$, $ \lambda_3=\frac{\lambda_0}{\rho_{\bar{o}}}\sqrt{\frac{\bar{u}}{2}}$,  $\lambda_1^{\circ}=\frac{\sqrt{2}}{\lambda_1 \sqrt{\bar{u}}}$, $\lambda_2^{\circ}=\frac{\sqrt{2}}{\lambda_2 \sqrt{\bar{u}}}$, and $\lambda_0=\sqrt{\rho_{\bar{o}}^2 C_D^2+4}$.
		%	\be \label{eq:circ_path_opt_time}
		%	\baa
		%	\tau_c(v_0,v_f)=&\frac{1}{4}\sqrt{\frac{\rho_{\bar{o}}}{\bar{u}}} \left [ 2 \left ( \log \left(\frac{\bar{v}^c+v_{sw}}{\bar{v}^c-v_{sw}} \right) +2 \tan^{-1}\left(\frac{v_{sw}}{\bar{v}^c}\right) \right) \right.\\
		%	&- \left ( \log \left(\frac{\bar{v}^c+v_{0}}{\bar{v}^c-v_{0}} \right) +2 \tan^{-1}\left(\frac{v_{0}}{\bar{v}^c}\right) \right) \\
		%	& \left. - \left ( \log \left(\frac{\bar{v}^c+v_{f}}{\bar{v}^c-v_{f}} \right) +2 \tan^{-1}\left(\frac{v_{f}}{\bar{v}^c}\right) \right)  \right ]
		%	\eaa
		%	\ee
		%	where  $v_{sw}=\bar{v}^c\sqrt{\tanh \left (\frac{P^c}{\rho_{\bar{o}}} +\frac{1}{4}\log \left ( \frac{((\bar{v}^c)^2+v_{f}^2)}{((\bar{v}^c)^2-v_{f}^2)} \frac{((\bar{v}^c)^2+v_{0}^2)}{((\bar{v}^c)^2-v_{0}^2)} \right) \right)}$. If one of the 
		
		% where $v_{sw}=\bar{v}^c\sqrt{\tanh \left (\frac{S}{\rho_{\bar{o}}} +\frac{1}{2}\tanh^{-1} \left( \frac{v_{0}^2}{\bar{v}_d^2}\right)  +\frac{1}{2}\tanh^{-1} \left( \frac{v_{f}^2}{\bar{v}_d^2}\right ) \right)}$
		\begin{comment}
		\begin{figure}
		\centering
		\includegraphics[width=0.8\linewidth]{figures/circPathVel.eps}
		\caption{Speed Profile along Circular Path}
		\label{fig:circPathSpeed}
		\end{figure}
		\end{comment}
		%	\begin{lemma}
		%		The optimal time $\tau_c$ given in \eqref{eq:circ_path_opt_time} is a convex function of the terminal speeds $v_{0}$ and $v_{f}$.
		%	\end{lemma}
		%	\begin{proof}
		%		See Appendix \ref{ap:proof_convex_opt_time}
		%		The Hessian of $\tau_c$ is:
		%		\be
		%		H(v_{0},v_{f})=\bbmat h_{11} & h_{12} \\
		%		h_{21} & h_{22}\ebmat
		%		\ee
		%		
		%	\end{proof}
		
		For each segment on the path, one can find the $\calC^0$ velocity profile that satisfies acceleration bound and approximately minimizes the total travel time for the given terminal speeds. 
		%Furthermore, the corresponding times are convex functions of these terminal speeds. 
		Let the total time required to travel $\mathbf{P}_0^f$ under $\tilde{\mathbf{v}}$ be:
		\be
		\arraycolsep=1pt
		\baa
		\tau \left(\mathbf{P}_0^f,\tilde{\mathbf{v}} \right)=&  \ds \sum_{l=1}^{N_{\mathbf{P}}}    \tau_{i_l}(\Gamma_l,v_{l},v_{l+1})  
		\eaa
		\ee
		where $i_l=\frac{3+(-1)^l}{2}$, $N_{\mathbf{P}}=2 N_{\mathbf{P}}^c+2$. % at the intersections of the straight line and circular arc segments along the path as shown in Fig.~\ref{fig:shortest_path}.
		The terminal speed vector $\tilde{\mathbf{v}}$ is found by solving:
		\be \label{eq:opt_time_prob}
		\baa
		\text{Minimize} & \tau \left(\mathbf{P}_0^f,\tilde{\mathbf{v}} \right) \\
		\text{Subject to} 	
		& 1)\; \Gamma_{l}\ge 
		\begin{cases}
			\log\left(\frac{\bar{u}-C_D v_{l}^2}{\bar{u}-C_D v_{l+1}^2} \right), & \text{if } v_{l}<v_{l+1}\\
			\log\left(\frac{\bar{u}+C_D v_{l}^2}{\bar{u}+C_D v_{l+1}^2}\right), & \text{if } v_{l}\ge v_{l+1}
		\end{cases}\\
		& \quad \; \forall l \in \{1,3,...,N_{\mathbf{P}}\}\\
		& 2)\; \Gamma_{l}\ge 
		\begin{cases}
			\Gamma^{+}(v_{l},v_{l+1}), & \text{if } v_{l}<v_{l+1}\\
			\Gamma^{-}(v_{l},v_{l+1}), & \text{if } v_{l}\ge v_{l+1}\\
		\end{cases} \\
		& \quad \; \forall l \in \{2,4,...,N_{\mathbf{P}}\}\\
		& 3)\;  v_{l} \in [0, \bar{v}^c], \forall l \in \{2,3,...,N_{\mathbf{P}}-1\}
		\eaa
		\ee
		where $\Gamma^{+}(v_{1},v_{2})=\frac{\rho_{\bar{o}} \left (\tanh^{-1}\left(\eta^+(v_2)\right)-\tanh^{-1}\left(\eta^+(v_1)\right)\right)}{\lambda_0}$,  $\Gamma^{-}(v_{1},v_{2})=\frac{\rho_{\bar{o}} \left (\tanh^{-1}\left(\eta^-(v_2)\right)-\tanh^{-1}\left(\eta^-(v_1)\right)\right)}{\lambda_0}$, $\eta^+(v)= \frac{C_D\rho_{\bar{o}}^2 \bar{u}+2(v)^2}{\lambda_0(\bar{v}^c)^2}$ and $\eta^-(v)= \frac{C_D\rho_{\bar{o}}^2 \bar{u}-2(v)^2}{\lambda_0(\bar{v}^c)^2}$, and maximum possible speed on the circular segment $\bar{v}^c = \sqrt{\frac{\sqrt{C_D^2\rho_{\bar{o}}^4+4\bar{u}^2\rho_{\bar{o}}^2}-C_D\rho_{\bar{o}}^2}{2}}$. The near time-optimal velocity profile is determined as:
		\be
		\tilde{\mathbf{v}}_{0}^{f} = \argmin_{\tilde{\mathbf{v}}}\left(\tau \left(\mathbf{P}_0^f,\tilde{\mathbf{v}} \right)\right).
		\ee
		%The optimization problem in \eqref{eq:opt_time_prob} is highly non-linear. However, interestingly, the problem has a very simple solution. 
		The solution to \eqref{eq:opt_time_prob} will have maximum possible speeds in the feasible set along the given path to ensure that the total travel time is minimized. For simplicity, we chose $v_1=v_{2N_{\mathbf{P}}^c+2}=0$, however this approach can be applied to any feasible non-zero initial and final speeds. The terminal speed vector that maximizes the speeds along the path with the given acceleration constraints can be obtained using Algorithm \ref{alg:min_time_vel}.
		\begin{algorithm}[h]
			\caption{Near time-optimal velocity on path $\mathbf{P}$}
			\label{alg:min_time_vel}
			%\DontPrintSemicolon % Some LaTeX compilers require you to use \dontprintsemicolon    instead
			\KwIn{$\mathbf{P}$, $N_{\mathbf{P}}^c$, $\bar{u}$, $C_D$}
			\KwOut{$\bar{\mathbf{v}}=[0,v_2,v_3,...,v_{2N_{\mathbf{P}}^c+1},0]$}
			$i=1$, $j=2N_{\mathbf{P}}^c+2$; $v_i=v_j=0$;\\
				\While{$v_i<\bar{v}^c \; \&  \;i <=2N_{\mathbf{P}}^c+2$}{
				    $v_{i+1}=\sqrt{\frac{1}{2}\left (\kappa_1\tanh(\frac{\Gamma_i \lambda_0}{\rho_{\bar{o}}}+\tanh^{-1}(\frac{(\kappa_2+2v_i^2)}{\kappa_1}))-\kappa_2 \right) }$\\
					\If{$i$ is odd (i.e., straight segment)}
					{ 
					$v_{i+1}=\sqrt{\frac{(\bar{u}-e^{-2 C_D \Gamma_i} (\bar{u}-C_D v_i^2))}{C_D}}$\\
						\If{acceleration beyond $\bar{v}^c$ possible}
						{$v_{i+1}=\bar{v}^c$}
					}				
					$i=i+1$;		
				}
			
				\While{$v_j<\bar{v}^c$}{
				    $v_{j-1}=\sqrt{\frac{1}{2}\left (\kappa_2-\kappa_1\tanh(-\frac{\Gamma_i \lambda_0}{\rho_{\bar{o}}}+\tanh^{-1}(\frac{(\kappa_2-2v_i^2)}{\kappa_1})) \right) }$\\
					\If{$j-1$ is odd}
					{   $v_{j-1}=\sqrt{\frac{(-\bar{u}+e^{-2 C_D \Gamma_i} (\bar{u}+C_D v_i^2))}{C_D}}$\\
						\If{deceleration to $0$ from $\bar{v}^c$ possible}
						{$v_{j-1}=\bar{v}^c$}
					}
					$j=j-1$;		
				}
				$v_{k}=\bar{v}^c$ for all k such that $i<k<j$\\
			\Return{$\bar{\mathbf{v}}=[v_1,v_2,v_3,...,v_{2N_{\mathbf{P}}^c+1},v_{2N_{\mathbf{P}}^c+2}]'$}
		\end{algorithm}
		
	%Appendixes should appear before the acknowledgment.
	
	\section{Conclusion}\label{sec:conclusions}
	We developed $\calC^1$-Tangent graph for an obstacle environment which provides continuously differentiable ( $\calC^1$) path between any two nodes. We developed a novel quadratic function to find common tangents of the obstacle boundaries in $\calC^1$-Tangent graph. We also extended the ellipse filter to more generic environments with convex obstacles to reduce search time of an algorithm finding shortest path on $\calC^1$-Tangent graph.
	
	We provide analytical expressions for the near time-optimal velocity profiles for the agents moving on a shortest path obtained on $\calC^1$-Tangent graph under damped double integrator dynamics with bounded acceleration. 
	
	\vspace{-1mm}
	\bibliographystyle{IEEEtran}
	\bibliography{supplement_Refs}
\end{document}